\journal{}
\newcommand{\set}[1]{\left\{#1\right\}}
\newcommand{\p}{\partial}
\newcommand{\mU}{\mathbf{U}}
\newcommand{\mV}{\mathbf{V}}
\newcommand{\mx}{\mathbf{x}}
\newcommand{\mz}{\mathbf{z}}
\newcommand{\vt}{\boldsymbol{\theta}}
\newtheorem{thm}{Theorem}[section]
\newtheorem{lem}[thm]{Lemma}
\begin{document}

\begin{frontmatter}

%% Title, authors and addresses

%% use the tnoteref command within \title for footnotes;
%% use the tnotetext command for the associated footnote;
%% use the fnref command within \author or \address for footnotes;
%% use the fntext command for the associated footnote;
%% use the corref command within \author for corresponding author footnotes;
%% use the cortext command for the associated footnote;
%% use the ead command for the email address,
%% and the form \ead[url] for the home page:
%%
%% \title{Title\tnoteref{label1}}
%% \tnotetext[label1]{}
%% \author{Name\corref{cor1}\fnref{label2}}
%% \ead{email address}
%% \ead[url]{home page}
%% \fntext[label2]{}
%% \cortext[cor1]{}
%% \address{Address\fnref{label3}}
%% \fntext[label3]{}

\title{Analysis of subspace migration in the limited-view inverse scattering problems}

%% use optional labels to link authors explicitly to addresses:
%% \author[label1,label2]{<author name>}
%% \address[label1]{<address>}
%% \address[label2]{<address>}
\author{Young Mi Kwon}
\ead{himang0912@kookmin.ac.kr}
\author{Won-Kwang Park\corref{Parkwk}}
\ead{parkwk@kookmin.ac.kr}
\address{Department of Mathematics, Kookmin University, Seoul 136-702, Korea.}
\cortext[Parkwk]{Tel: +82 2 910 5748; fax: +82 2 910 4739.}

\begin{abstract}
In this paper, we analyze the subspace migration that occurs in limited-view inverse scattering problems. Based on the structure of singular vectors associated with the nonzero singular values of the multi-static response matrix, we establish a relationship between the subspace migration imaging function and Bessel functions of integer order of the first kind. The revealed structure and numerical examples demonstrate why subspace migration is applicable for the imaging of small scatterers in limited-view inverse scattering problems.
\end{abstract}

\begin{keyword}
Subspace migration \sep limited-view inverse problem \sep Multi-Static Response (MSR) matrix \sep Bessel functions \sep numerical examples

%% keywords here, in the form: keyword \sep keyword

%% MSC codes here, in the form: \MSC code \sep code
%% or \MSC[2008] code \sep code (2000 is the default)

\end{keyword}

\end{frontmatter}

% \linenumbers

%% main text
\section{Introduction}
This work focuses on subspace migration for the imaging of small perfectly conducting cracks located in the homogeneous, two-dimensional space $\mathbb{R}^2$ in limited-view inverse scattering problems. Subspace migration is a simple and effective imaging technique, and is thus applied to many inverse scattering problems. For this reason, various remarkable properties have been investigated in many studies. Related work can be found in \cite{A,AGKPS,HHSZ,JKHP,P1,P3,P4,PL2,PP} and the references therein.

Based on these studies, it turns out that subspace migration can be applied not only in full-view but also limited-view inverse scattering problems. However, the use of subspace migration-based imaging techniques has been applied heuristically. Although the structure of subspace migration in full-view inverse scattering problems has been established~\cite{A,JKHP}, it is still applied heuristically in limited-view problems. Hence, a mathematical identification of its structure in such problems still needs to be performed, which is the motivation for our work.

In this paper, we extend the analysis \cite{JKHP} of the structure of subspace migration in the full-view problem to the limited-view problem of imaging perfectly conducting cracks of small length. This is based on the fact that the far-field pattern can be represented by the asymptotic expansion formula in the presence of such cracks. From the identified structure, we find a condition for successful imaging. Subsequently, the structure of multi-frequency subspace migration is analyzed. This shows how the application of multiple frequencies enhances its applicability.

This paper is organized as follows. In Section \ref{sec2}, we survey two-dimensional direct scattering problems, the asymptotic expansion formula in the presence of small cracks, and subspace migration. In Section \ref{sec3}, we investigate the structure of single- and multi-frequency subspace migration in limited-view problems by finding a relationship with integer-order Bessel functions of the first kind. Section \ref{sec4} presents some numerical examples to support our investigation.

\section{Preliminaries}\label{sec2}
In this section, we briefly introduce two-dimensional direct scattering problems in the presence of small, linear perfectly conducting crack(s), the asymptotic expansion formula, and subspace migration.

\subsection{Direct scattering problems and the asymptotic expansion formula}
Let $\Sigma$ be a straight line segment that describes a perfectly conducting crack of small length $2\ell$,
\[\Sigma(\mz):=\set{\mz=(x,y)^T:-\ell\leq x\leq\ell}.\]
With this $\Sigma$, let $u_{\mathrm{tot}}(\mx,\vt;k)$ satisfy the following Helmholtz equation
\begin{equation}\label{Helmholtz}
  \left\{\begin{array}{rcl}
            \triangle u_{\mathrm{tot}}(\mx,\vt;k)+k^2u_{\mathrm{tot}}(\mx,\vt;k)=0 & \mbox{in} & \mathbb{R}^2\backslash\Sigma \\
            \noalign{\medskip}u_{\mathrm{tot}}(\mx,\vt;k)=0 & \mbox{on} & \Sigma,
          \end{array}
  \right.
\end{equation}
where $k=2\pi/\lambda$ denotes a strictly positive wavenumber with wavelength $\lambda$ and we assume that $k^2$ is not an eigenvalue of (\ref{Helmholtz}).

It is well known that the total field $u_{\mathrm{tot}}(\mx,\vt;k)$ can be decomposed as
\[u_{\mathrm{tot}}(\mx,\vt;k)=u_{\mathrm{inc}}(\mx,\vt;k)+u_{\mathrm{scat}}(\mx,\vt;k),\]
where $u_{\mathrm{inc}}(\mx,\vt;k)=e^{ik\vt\cdot\mx}$ is the incident field with direction $\vt_n$ on the two-dimensional unit circle $\mathbb{S}^1$, and $u_{\mathrm{scat}}(\mx,\vt;k)$ is the unknown scattered field satisfying the Sommerfeld radiation condition
\[\lim_{|\mx|\to\infty}|\mx|^{1/2}\left(\frac{\p u_{\mathrm{scat}}(\mx,\vt;k)}{\p|\mx|}-iku_{\mathrm{ scat}}(\mx,\vt;k)\right)=0,\]
uniformly in all directions $\hat{\mx}=\mx/|\mx|$.

The far-field pattern $u_\infty(\hat{\mx},\vt;k)$ of the scattered field $u_{\mathrm{scat}}(\mx,\vt;k)$ is defined on $\mathbb{S}^1$ and can be represented as
\[u_{\mathrm{ scat}}(\mx,\vt;k)=\frac{e^{ik|\mx|}}{|\mx|^{1/2}}\left(u_\infty(\hat{\mx},\vt;k)+O\left(\frac{1}{|\mx|}\right)\right)\]
uniformly in all directions $\hat{\mx}=\mx/|\mx|$ and $|\mx|\longrightarrow+\infty$. Based on \cite{AKLP}, the far-field pattern can then be represented as the following asymptotic expansion formula.

\begin{lem}[Asymptotic expansion formula in the presence of a small crack] Let $u_{\mathrm{tot}}(\mx,\vt;k)$ satisfy (\ref{Helmholtz}) and $u_{\mathrm{inc}}(\mx,\vt;k)=e^{ik\vt_n\cdot\mx}$. Then, for $0<\ell<2$, the following asymptotic expansion holds:
  \begin{equation}\label{AsymptoticFormula}
    u_\infty(\hat{\mx},\vt;k)=-\frac{2\pi}{\ln(\ell/2)}u_{\mathrm{inc}}(\mz,\vt;k)\overline{u_{\mathrm{ inc}}(\mz,\hat{\mx};k)}+O\bigg(\frac{1}{|\ln\ell|^2}\bigg).
  \end{equation}
\end{lem}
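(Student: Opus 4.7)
The plan is to combine the single-layer-potential representation of the scattered field with a small-crack asymptotic analysis of the resulting boundary integral equation. First, I would represent
\[
u_{\mathrm{scat}}(\mx,\vt;k) = \int_\Sigma \Phi(\mx,\my;k)\varphi(\my,\vt;k)\,d\sigma(\my),
\]
where $\Phi(\mx,\my;k)=\frac{i}{4}H_0^{(1)}(k|\mx-\my|)$ is the two-dimensional Helmholtz fundamental solution and $\varphi$ is an unknown density on $\Sigma$. Taking the Dirichlet trace on $\Sigma$ and imposing the boundary condition from (\ref{Helmholtz}) yields the first-kind integral equation
\[
\int_\Sigma \Phi(\mx,\my;k)\varphi(\my,\vt;k)\,d\sigma(\my) = -u_{\mathrm{inc}}(\mx,\vt;k), \qquad \mx \in \Sigma.
\]

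Next, I would extract the leading behavior as $\ell \to 0$. Using the small-argument expansion of $H_0^{(1)}$, the kernel splits as $\Phi(\mx,\my;k)=\frac{1}{2\pi}\ln|\mx-\my|+\Phi_{\mathrm{reg}}(\mx,\my;k)$ with a bounded remainder, and on $\Sigma(\mz)$ the incident field satisfies $u_{\mathrm{inc}}(\mx,\vt;k)=u_{\mathrm{inc}}(\mz,\vt;k)+O(\ell)$. Rescaling $\mx=\mz+\ell\mx'$, $\my=\mz+\ell\my'$ shows that the logarithmic singular part contributes a factor proportional to $\ln(\ell/2)$ against the total density, so the equation reduces at leading order to
\[
\frac{\ln(\ell/2)}{2\pi}\int_\Sigma \varphi(\my,\vt;k)\,d\sigma(\my) = -u_{\mathrm{inc}}(\mz,\vt;k) + O\bigl(|\ln\ell|^{-1}\bigr),
\]
which, after inversion, yields the leading mass
\[
\int_\Sigma \varphi(\my,\vt;k)\,d\sigma(\my) = -\frac{2\pi}{\ln(\ell/2)}u_{\mathrm{inc}}(\mz,\vt;k) + O\bigl(|\ln\ell|^{-2}\bigr).
\]

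Finally, I would insert this into the standard far-field formula for a single-layer potential,
\[
u_\infty(\hat{\mx},\vt;k) = c(k)\int_\Sigma e^{-ik\hat{\mx}\cdot\my}\varphi(\my,\vt;k)\,d\sigma(\my),
\]
and use the identity $e^{-ik\hat{\mx}\cdot\my}=\overline{u_{\mathrm{inc}}(\mz,\hat{\mx};k)}(1+O(\ell))$ on $\Sigma(\mz)$ to arrive at (\ref{AsymptoticFormula}) after absorbing the normalization constant. The main obstacle is making the log-dominated inversion rigorous: one must cleanly isolate the singular part of $\Phi$, verify that its inverse on a vanishing segment acts as claimed after the rescaling, and control both the regular kernel $\Phi_{\mathrm{reg}}$ and the higher Chebyshev modes of $\varphi$ uniformly in $\ell$ in an appropriate Sobolev scale such as $H^{-1/2}(\Sigma)$; this delicate functional-analytic step is precisely what is carried out in \cite{AKLP}.
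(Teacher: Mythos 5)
The first thing to note is that the paper itself does not prove this lemma at all: it is imported verbatim from the reference \cite{AKLP} (``Based on \cite{AKLP}, the far-field pattern can then be represented as\ldots''), so there is no in-paper argument to measure your proposal against. What you have sketched is, in outline, the derivation carried out in that cited source: single-layer ansatz on the crack, first-kind integral equation with logarithmic kernel, inversion of the log kernel on a shrinking segment, and insertion of the resulting density mass into the far-field representation. One point your ``rescaling'' step glosses over is where $\ln(\ell/2)$, as opposed to $\ln\ell$, comes from: rescaling alone produces $\ln\ell$ plus an $O(1)$ term, and that $O(1)$ term is pinned down by the fact that the Chebyshev (equilibrium) density $(\pi\sqrt{\ell^2-t^2})^{-1}$ has constant logarithmic potential equal to $\ln(\ell/2)$ on the segment, i.e.\ $\int_{-1}^{1}\ln|s-t|\,(\pi\sqrt{1-t^2})^{-1}dt=-\ln 2$; this is exactly the equilibrium-measure computation hiding behind your appeal to ``higher Chebyshev modes.''

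Two convention-level slips deserve flagging, because they conspire to produce the stated coefficient. First, the logarithmic singularity of $\Phi(\mx,\my;k)=\frac{i}{4}H_0^{(1)}(k|\mx-\my|)$ is $-\frac{1}{2\pi}\ln|\mx-\my|$, not $+\frac{1}{2\pi}\ln|\mx-\my|$; with the correct sign your mass formula comes out as $\int_\Sigma\varphi\,d\sigma=+\frac{2\pi}{\ln(\ell/2)}u_{\mathrm{inc}}(\mz,\vt;k)+O(|\ln\ell|^{-2})$. Second, the far-field constant of the two-dimensional single layer is $c(k)=e^{i\pi/4}/\sqrt{8\pi k}$ under the normalization $u_{\mathrm{scat}}=\frac{e^{ik|\mx|}}{|\mx|^{1/2}}\left(u_\infty+O(|\mx|^{-1})\right)$ adopted in Section 2 of this paper; it is not $\pm1$ and cannot simply be ``absorbed.'' Under the paper's own normalization the leading coefficient is $\frac{e^{i\pi/4}}{\sqrt{8\pi k}}\cdot\frac{2\pi}{\ln(\ell/2)}$, and the clean form $-\frac{2\pi}{\ln(\ell/2)}$ in the statement corresponds to a different (renormalized) definition of the far-field pattern inherited from the cited references. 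This mismatch is immaterial for the imaging analysis that follows --- only the product structure $e^{ik(\vt-\hat{\mx})\cdot\mz}$ and the $1/\ln(\ell/2)$ scaling are used --- but a rigorous write-up of your argument should fix the normalization convention explicitly rather than let a kernel sign error cancel against an unspecified absorbed constant.
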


\subsection{Introduction to subspace migration}
At this point, we apply (\ref{AsymptoticFormula}) to explain an imaging technique known as subspace migration. From \cite{AGKPS}, subspace migration is based on the structure of singular vectors of the collected Multi-Static Response (MSR) matrix
\[\mathbb{K}(k)=\bigg[u_\infty(\hat{\mx}_m,\vt_n;k)\bigg]_{m,n=1}^{N},\]
whose elements $u_\infty(\hat{\mx}_m,\vt_n;k)$ make up the far-field pattern with incident number $n$ and observation number $m$.
From now on, we assume that there exist $S$ different small cracks $\Sigma_s$ with the same length $\ell$, centered at $\mz_s$, $s=1,2,\cdots,S$. The asymptotic expansion formula (\ref{AsymptoticFormula}) can then be represented as follows:
\begin{equation}\label{AsymptoticFormula2}
    u_\infty(\hat{\mx},\vt;k)\approx-\frac{2\pi}{\ln(\ell/2)}\sum_{s=1}^{S}u_{\mathrm{inc}}(\mz_s,\vt_n;k)\overline{u_{\mathrm{ inc}}(\mz_s,\hat{\mx}_m;k)}=-\frac{2\pi}{\ln(\ell/2)}\sum_{s=1}^{S}e^{ik(\vt_n-\hat{\mx}_m)\cdot\mz_s}.
\end{equation}

\subsection{Introduction to subspace migration}
The subspace migration imaging algorithm is based on the structure of the singular vectors of the MSR matrix $\mathbb{K}(k)$. For the sake of simplicity, suppose that the incident and observation directions coincide. In this case, for each $\hat{\mx}_m=-\vt_m$, the $mn-$th element of the MSR matrix becomes
\[u_\infty(-\vt_m,\vt_n;k)=-\frac{2\pi}{\ln(\ell/2)}\sum_{s=1}^{S}e^{ik(\vt_n+\vt_m)\cdot\mz_s}.\]
Now, let us define a unit vector
\[\mathbf{W}(\mx;k):=\frac{1}{N}\bigg[e^{ik\vt_1\cdot\mx},e^{ik\vt_2\cdot\mx},\cdots,e^{ik\vt_N\cdot\mx}\bigg]^T\]
such that $\mathbb{K}(k)$ can be decomposed as
\[\mathbb{K}(k)=-\frac{2\pi}{\ln(\ell/2)}\sum_{s=1}^{S}\mathbf{W}(\mz_s;k)\mathbf{W}(\mz_s;k)^T.\]

Using this decomposition, we can introduce the subspace migration imaging algorithm. By performing Singular Value Decomposition (SVD) on $\mathbb{K}(k)$
\[\mathbb{K}(k)=\mathbb{U}(k)\mathbb{D}(k)\mathbb{V}(k)^*\approx\sum_{s=1}^{S}\sigma_s(k)\mU_s(k)\mV_s(k)^*,\]
we can observe that the left- and right-singular vectors, $\set{\mU_s(k)}_{s=1}^{S}$ and $\set{\mV_s(k)}_{s=1}^{S}$, respectively, satisfy
\[\mU_s(k)\simeq\mathbf{W}(\mz_s;k)\quad\mbox{and}\quad\overline{\mV}_s(k)\simeq\mathbf{W}(\mz_s;k).\]
Hence, by defining an imaging functional
\[\mathcal{I}(\mx;k):=\left|\sum_{s=1}^{S}\left\langle\mathbf{W}(\mx;k),\mU_s(k)\right\rangle\left\langle\mathbf{W}(\mx;k),\overline{\mV}_s(k)\right\rangle\right|\]
we can obtain the shape of $\Sigma_s$ based on the orthogonality of the singular vectors. Here, $\langle\mathbf{a},\mathbf{b}\rangle=\overline{\mathbf{a}}\cdot\mathbf{b}$. A more detailed discussion can be found in \cite{AGKPS,JKHP,P1,P3,P4,PL2,PP}.

\section{Structure and properties of subspace migrations in limited-view problems}\label{sec3}
In this section, we carefully explore the structure of subspace migration in limited view problems and discuss its properties. Suppose that each $\vt_n$ is an element of $\mathbb{S}_{\mathrm{sub}}^1\subset\mathbb{S}^1$ such that
\[\vt_n:=\left(\cos\theta_n,\sin\theta_n\right)^T,\quad\theta_n=\alpha+(\beta-\alpha)\frac{n-1}{N-1},\]
where $\alpha\ne0$ and $\beta\ne2\pi$. From this, we will derive a useful approximation in the following Lemma. This will play a key role in our exploration.

\begin{lem}\label{Lemma}
  Assume that $k$ is sufficiently large, $\vt\in\mathbb{S}_{\mathrm{sub}}^1$, and $\mx,\mz\in\mathbb{R}^2$. Then, the following approximation holds:
\begin{align*}
\int_{\mathbb{S}_{\mathrm{sub}}^1}e^{ik\vt\cdot(\mx-\mz)}dS(\vt)=(\beta-\alpha)J_0(k|\mx-\mz|)+O\bigg(\frac{1}{\sqrt{k|\mx-\mz|}}\bigg).
\end{align*}
\end{lem}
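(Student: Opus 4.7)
The plan is to use the Jacobi--Anger expansion, which gives the natural bridge between the plane-wave exponential on the left-hand side and the integer-order Bessel function on the right-hand side that is hinted at by the statement.

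First, I would parametrize $\vt=(\cos\theta,\sin\theta)^T$ for $\theta\in[\alpha,\beta]$, so that $dS(\vt)=d\theta$, and write $\mx-\mz = r(\cos\phi,\sin\phi)^T$ with $r:=|\mx-\mz|$ and $\phi$ the polar angle of $\mx-\mz$. Then $\vt\cdot(\mx-\mz) = r\cos(\theta-\phi)$, and the integral reduces to $\int_\alpha^\beta e^{ikr\cos(\theta-\phi)}\,d\theta$.

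Next, I would apply the Jacobi--Anger identity
\[e^{ikr\cos\psi}=\sum_{n=-\infty}^{\infty}i^n J_n(kr)\,e^{in\psi},\]
which converges uniformly in $\psi$, with $\psi:=\theta-\phi$. Exchanging summation and integration, the $n=0$ term contributes exactly the main part $(\beta-\alpha)J_0(kr)$, while each $n\neq 0$ term integrates to $\frac{i^{n-1}}{n}\bigl(e^{in(\beta-\phi)}-e^{in(\alpha-\phi)}\bigr)J_n(kr)$. Thus the whole integral equals $(\beta-\alpha)J_0(kr)$ plus a remainder $R(k,r)$ assembled from the non-zero modes, and the entire lemma reduces to estimating $R(k,r)$.

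The hard part will be showing $R(k,r)=O(1/\sqrt{kr})$. I would split the sum at $|n|\sim kr$: for $|n|>kr$, the super-exponential decay of $J_n(kr)$ makes the tail negligible; for $1\leq|n|\leq kr$, I would insert the large-argument asymptotic $J_n(kr)=\sqrt{2/(\pi kr)}\cos(kr-n\pi/2-\pi/4)+O((kr)^{-3/2})$ together with the damping factor $1/n$ coming from the oscillatory integral in $\theta$. A crude absolute-value bound $|R|\lesssim\sum_{n\neq 0}|J_n(kr)|/|n|$ only yields $O(\log(kr)/\sqrt{kr})$, so to reach the sharp rate claimed in the lemma I would need either a summation-by-parts argument or a direct stationary-phase analysis of $\int_\alpha^\beta e^{ikr\cos(\theta-\phi)}\,d\theta$, whose only stationary points are $\theta=\phi$ and $\theta=\phi+\pi$, exploiting the cancellation between the factor $e^{in(\beta-\phi)}-e^{in(\alpha-\phi)}$ and the oscillation of $J_n(kr)$ in $n$. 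This oscillatory cancellation is the one delicate step; the rest of the argument is mechanical once the Jacobi--Anger expansion is in place.
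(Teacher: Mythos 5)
Your proposal takes the same core route as the paper's own proof: polar coordinates, the Jacobi--Anger expansion, and identification of the $n=0$ mode with the leading term $(\beta-\alpha)J_0(kr)$ (the paper uses the real cosine form of the expansion, you the bilateral exponential form; the difference is immaterial). Where the two diverge is in the treatment of the remainder, and there your version is the more careful one. The paper disposes of the nonzero modes by applying the large-argument asymptotic (\ref{AsymptoticBessel}) to each $J_n(kr)$ separately and then ``neglecting'' the whole series; as you correctly point out, this termwise reasoning does not by itself give the stated rate, because (\ref{AsymptoticBessel}) is not uniform in $n$ (it is valid only when $kr$ dominates $n^2$), and the crude absolute bound $\sum_{n\geq 1}|J_n(kr)|/n$, whose effective range is $1\leq n\lesssim kr$ before the super-exponential decay of $J_n$ sets in, yields only $O\bigl(\log(kr)/\sqrt{kr}\bigr)$. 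The step you flag as delicate is thus precisely the step the paper glosses over. Of your two proposed repairs, the stationary-phase one is the most efficient and in fact closes the proof at once: the phase $\theta\mapsto\cos(\theta-\phi)$ has only nondegenerate critical points ($\theta=\phi$ and $\theta=\phi+\pi$), so $\int_\alpha^\beta e^{ikr\cos(\theta-\phi)}\,d\theta=O(1/\sqrt{kr})$ uniformly; since $(\beta-\alpha)J_0(kr)$ is itself $O(1/\sqrt{kr})$ by (\ref{AsymptoticBessel}), the claimed identity follows immediately --- which also exposes that the lemma's ``main term'' and its error term are of the same order, so the statement carries quantitative content only in the pre-asymptotic regime where $kr$ is moderate. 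In short: same decomposition as the paper, but a correctly diagnosed weakness in the error estimate and a viable route to a rigorous proof that the paper itself does not supply.
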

\begin{proof}
Let us consider polar coordinates, such that $\vt=(\cos\theta,\sin\theta)$ and $\mx-\mz=r(\cos\phi,\sin\phi)$. Because the Jacobi--Anger expansion
\[e^{iz\cos\theta}=J_0(z)+2\sum_{n=1}^{\infty}i^{n}J_n(z)\cos(n\theta),\]
holds uniformly (see \cite{DR}), we can evaluate
\begin{align*}
\int_{\mathbb{S}_{\mathrm{sub}}^1}e^{ik\vt\cdot(\mx-\mz)}dS(\vt)&=\int_{\alpha}^{\beta}e^{ikr\cos(\theta-\phi)}dS(\vt)=\int_{\alpha}^{\beta}J_0(kr)d\theta+2\int_{\alpha}^{\beta}\sum_{n=1}^{\infty}i^{n}J_n(kr)\cos(n\theta)d\theta\\
&=(\beta-\alpha)J_0(kr)+2\sum_{n=1}^{\infty}i^{n}J_n(kr)\int_{\alpha}^{\beta}\cos(n\theta)d\theta\\
%&=(\beta-\alpha)J_0(kr)+2\sum_{n=1}^{\infty}i^{n}J_n(kr)\left[\frac{1}{n}\bigg(\sin(n\beta)-\sin(n\alpha)\bigg)\right]\\
&=(\beta-\alpha)J_0(kr)+4\sum_{n=1}^{\infty}\frac{i^{n}}{n}J_n(kr)\sin\frac{1}{2}\bigg(n(\beta-\alpha)\bigg)\cos\frac{1}{2}\bigg(n(\beta+\alpha)\bigg).
\end{align*}
Assume that $k|\mx-\mz|$ is sufficiently large. As the following asymptotic form holds
\begin{equation}\label{AsymptoticBessel}
  J_n(k|\mx-\mz|)=\sqrt{\frac{2}{\pi k|\mx-\mz|}}\bigg(\cos\bigg(k|\mx-\mz|-\frac{n\pi}{2}-\frac{\pi}{4}\bigg)\bigg)=O\bigg(\frac{1}{\sqrt{k|\mx-\mz|}}\bigg),
\end{equation}
we can neglect the term
\[4\sum_{n=1}^{\infty}\frac{i^{n}}{n}J_n(kr)\sin\frac{1}{2}\bigg(n(\beta-\alpha)\bigg)\cos\frac{1}{2}\bigg(n(\beta+\alpha)\bigg).\]
This completes the proof.
\end{proof}

Based on Lemma \ref{Lemma}, we can immediately obtain the following results.
\begin{thm}[Single-frequency subspace migration]\label{Theorem1}
 For sufficiently large $N$, $\mathcal{I}(\mx;k)$ can be written as
  \begin{equation}\label{ImagingFunctionSingle}
\mathcal{I}(\mx;k)\approx\sum_{s=1}^{S}\left(J_0(k|\mx-\mz_s|)+O\bigg(\frac{1}{\sqrt{k|\mx-\mz|}}\bigg)\right)^2.
  \end{equation}
\end{thm}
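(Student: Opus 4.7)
The plan is to reduce the imaging functional to two copies of the same scalar inner product (one per singular-vector factor), then turn that inner product into a Riemann sum over the limited aperture, and finally invoke Lemma \ref{Lemma} to extract the Bessel contribution.

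First I would substitute the SVD identifications $\mU_s(k)\simeq\mathbf{W}(\mz_s;k)$ and $\overline{\mV}_s(k)\simeq\mathbf{W}(\mz_s;k)$ (already established in Section 2) into the definition of $\mathcal{I}(\mx;k)$. This collapses the functional to
\[
\mathcal{I}(\mx;k)\approx\left|\sum_{s=1}^{S}\big\langle\mathbf{W}(\mx;k),\mathbf{W}(\mz_s;k)\big\rangle^{2}\right|,
\]
since both brackets are the same scalar once $\mU_s$ and $\overline{\mV}_s$ are replaced by $\mathbf{W}(\mz_s;k)$.

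Next I would write out the inner product using the definition of $\mathbf{W}$. With the unit-vector normalization, one obtains
\[
\big\langle\mathbf{W}(\mx;k),\mathbf{W}(\mz_s;k)\big\rangle=\frac{1}{N}\sum_{n=1}^{N}e^{ik\vt_n\cdot(\mz_s-\mx)}.
\]
Because the directions $\vt_n$ are equispaced in angle over $[\alpha,\beta]$, for large $N$ this sum is exactly the Riemann approximation of
\[
\frac{1}{\beta-\alpha}\int_{\mathbb{S}^{1}_{\mathrm{sub}}}e^{ik\vt\cdot(\mz_s-\mx)}\,dS(\vt).
\]
I would then apply Lemma \ref{Lemma} (with the roles of $\mx$ and $\mz$ symmetric under the exponential) to replace the integral by $(\beta-\alpha)J_0(k|\mx-\mz_s|)$ up to an $O(1/\sqrt{k|\mx-\mz_s|})$ remainder; the factor $\beta-\alpha$ cancels, giving
\[
\big\langle\mathbf{W}(\mx;k),\mathbf{W}(\mz_s;k)\big\rangle=J_0(k|\mx-\mz_s|)+O\!\left(\tfrac{1}{\sqrt{k|\mx-\mz_s|}}\right).
\]
Squaring this scalar, summing over $s$, and observing that $J_0$ is real-valued so the outer modulus can be absorbed into the error, yields the claimed formula (\ref{ImagingFunctionSingle}).

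The routine steps are the Riemann-sum limit and the substitution of the SVD approximations; the step that deserves the most care is controlling the remainder uniformly in $s$ and $\mx$. In particular, one should keep in mind that the $O(1/\sqrt{k|\mx-\mz_s|})$ estimate from Lemma \ref{Lemma} degrades when $\mx$ is close to $\mz_s$, and that the Riemann-sum error from replacing $\frac{1}{N}\sum_n$ by $\frac{1}{\beta-\alpha}\int_\alpha^\beta$ introduces its own $O(1/N)$ correction; merging these into the single big-$O$ in the statement is the subtle bookkeeping of the proof, but no new idea beyond Lemma \ref{Lemma} is required.
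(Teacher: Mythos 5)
Your proposal is correct and follows essentially the same route as the paper's own proof: replace $\mU_s(k)$ and $\overline{\mV}_s(k)$ by $\mathbf{W}(\mz_s;k)$, view each inner product $\left\langle\mathbf{W}(\mx;k),\mathbf{W}(\mz_s;k)\right\rangle$ as a Riemann sum of the aperture integral over $\mathbb{S}_{\mathrm{sub}}^1$, apply Lemma \ref{Lemma}, cancel the $(\beta-\alpha)$ factor, and square and sum over $s$. Your added bookkeeping of the $O(1/N)$ Riemann-sum error and the degradation of the remainder near $\mx\approx\mz_s$ is more careful than the paper, which simply absorbs these into the $\approx$ signs, but it is not a different argument.
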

\begin{proof}
Since the incident and observation direction configurations are the same, we set $\triangle\vt_p:=|\vt_p-\vt_{p-1}|$ for $p=2,3,\cdots,N,$ and $\triangle\vt_1:=|\vt_1-\vt_N|$. Then, applying Lemma \ref{Lemma} yields
\begin{align*}
\mathcal{I}(\mx;k)&=\left|\sum_{s=1}^{S}\left\langle\mathbf{W}(\mx;k),\mU_s(k)\right\rangle\left\langle\mathbf{W}(\mx;k),\overline{\mV}_s(k)\right\rangle\right|\simeq\sum_{s=1}^{S}\left(\sum_{p=1}^{N}e^{ik\vt_p\cdot(\mx-\mz_s)}\frac{\Delta\vt_p}{2\pi}\right)^2\\
&\approx\frac{1}{(\beta-\alpha)^{2}}\left(\int_{\mathbb{S}_{\mathrm{sub}}^1}e^{ik\vt\cdot(\mx-\mz_s)}dS(\vt)\right)^2=\frac{1}{(\beta-\alpha)^{2}}\sum_{s=1}^{S}\left((\beta-\alpha)J_0(k|\mx-\mz_s|)+O\bigg(\frac{1}{\sqrt{k|\mx-\mz|}}\bigg)\right)^{2}.
\end{align*}
Therefore, we can obtain
\[\mathcal{I}(\mx;k)\approx\sum_{s=1}^{S}\left(J_0(k|\mx-\mz_s|)+O\bigg(\frac{1}{\sqrt{k|\mx-\mz|}}\bigg)\right)^2.\]
This finishes the proof.
\end{proof}

Based on recent work (see \cite{A,AGKPS,HHSZ,JKHP,P1,P3,P4,PL2,PP}), it has been confirmed that a multi-frequency approach gives better results than applying a single frequency. The following theorem supports this fact.
\begin{thm}[Multi-frequency subspace migration]\label{Theorem2}
  Let $k_f:=2\pi/\lambda_f$ and $N$ be sufficiently large. Then, multi-frequency subspace migration
  \[\mathcal{I}_{\mathrm{MF}}(\mx,k_1,k_F;F)=\frac{1}{F}\left|\sum_{f=1}^{F}\sum_{s=1}^{S}\left\langle\mathbf{W}(\mx;k_f),\mU_s(k_f)\right\rangle\left\langle\mathbf{W}(\mx;k_f),\overline{\mV}_s(k_f)\right\rangle\right|\]
  can be written as follows
\begin{multline}\label{StructureImagingFunctionMultiple}
\mathcal{I}_{\mathrm{MF}}(\mx,k_1,k_F;F)\approx\left[\frac{k_F}{k_F-k_1}\bigg(J_0(k_F|\mx-\mz|)^{2}+J_1(k_F|\mx-\mz|)^{2}\bigg)\right.\\
\left.-\frac{k_1}{k_F-k_1}\bigg(J_0(k_1|\mx-\mz|)^{2}+J_1(k_1|\mx-\mz|)^{2}\bigg)\right].
  \end{multline}
\end{thm}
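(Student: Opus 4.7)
My plan is to apply the single-frequency analysis of Theorem~\ref{Theorem1} separately at each wavenumber $k_f$, interpret the arithmetic mean $\frac{1}{F}\sum_{f=1}^{F}$ as a Riemann sum for an integral over $[k_1,k_F]$, and then evaluate that integral in closed form via a Bessel-function antiderivative identity.

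First I would repeat the inner-product reduction that leads to \eqref{ImagingFunctionSingle}, but with one extra term of the Jacobi--Anger expansion of Lemma~\ref{Lemma} retained: the $n=1$ contribution has the same magnitude $O(1/\sqrt{k|\mx-\mz|})$ as the leading $J_0$ term, so it cannot be discarded before frequency averaging. After squaring the resulting two-term approximation and using the SVD identifications $\mU_s\simeq\mW(\mz_s;k_f)$ and $\overline{\mV}_s\simeq\mW(\mz_s;k_f)$, I expect the $k_f$-slice to take the form
\[
\sum_{s=1}^{S}\langle\mW(\mx;k_f),\mU_s(k_f)\rangle\langle\mW(\mx;k_f),\overline{\mV}_s(k_f)\rangle\;\approx\;\sum_{s=1}^{S}\Big(J_0(k_f|\mx-\mz_s|)^2-J_1(k_f|\mx-\mz_s|)^2\Big).
\]
With equally spaced $k_f$ and the large-$F$ hypothesis, $\frac{1}{F}\sum_{f=1}^{F}$ converges to $\frac{1}{k_F-k_1}\int_{k_1}^{k_F}(\,\cdot\,)\,dk$ in the Riemann-sum limit, and the outer modulus drops because the leading integrand is real.

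It then remains to evaluate $\int_{k_1}^{k_F}\!\big(J_0(kr)^2-J_1(kr)^2\big)\,dk$ for $r=|\mx-\mz_s|$. I would invoke the identity
\[
\frac{d}{dk}\Big[k\big(J_0(kr)^2+J_1(kr)^2\big)\Big]=J_0(kr)^2-J_1(kr)^2,
\]
which follows from $J_0'=-J_1$ together with the Bessel recurrence $(uJ_1(u))'=uJ_0(u)$. The fundamental theorem of calculus, applied on $[k_1,k_F]$ and divided by $k_F-k_1$, then produces exactly the bracketed expression in \eqref{StructureImagingFunctionMultiple}.

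The principal obstacle will be the first step: upgrading Theorem~\ref{Theorem1} so that the $-J_1^2$ contribution is exhibited explicitly rather than absorbed into the $O(1/\sqrt{k|\mx-\mz|})$ error. Concretely, the $n=1$ correction in Lemma~\ref{Lemma} is purely imaginary and depends on the direction of $\mx-\mz_s$ through a factor of the form $\sin\tfrac{1}{2}(\beta-\alpha)\cos\tfrac{1}{2}((\beta+\alpha)-2\phi_s)$; one must propagate this correction through the product of the two inner products and verify that, after squaring and absorbing the angular prefactors against the Jacobi--Anger normalization, the real part collapses to the clean form $J_0^2-J_1^2$. Once this bookkeeping is secured, the Riemann-sum limit and the Bessel antiderivative identity are entirely routine.
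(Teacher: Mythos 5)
Your overall skeleton---replace the frequency average $\frac{1}{F}\sum_f$ by $\frac{1}{k_F-k_1}\int_{k_1}^{k_F}(\cdot)\,dk$ and evaluate via a Bessel antiderivative---is exactly the paper's skeleton, and your derivative identity $\frac{d}{dk}\big[k\big(J_0(kr)^2+J_1(kr)^2\big)\big]=J_0(kr)^2-J_1(kr)^2$ is precisely the paper's table identity $\int J_0(x)^2dx=x\big(J_0(x)^2+J_1(x)^2\big)+\int J_1(x)^2dx$, cited from \cite{R}. The genuine gap is the step you yourself flagged as the principal obstacle: the claim that retaining the $n=1$ Jacobi--Anger term upgrades the single-frequency slice to $J_0^2-J_1^2$. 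That claim is false. The $n=1$ contribution to the aperture integral is $4iJ_1(kr_s)\sin\frac{1}{2}(\beta-\alpha)\cos\big(\frac{1}{2}(\alpha+\beta)-\phi_s\big)$, so the normalized slice is $\big(J_0(kr_s)+ic_sJ_1(kr_s)\big)^2$ with $c_s=\frac{4}{\beta-\alpha}\sin\frac{1}{2}(\beta-\alpha)\cos\big(\frac{1}{2}(\alpha+\beta)-\phi_s\big)$, whose real part is $J_0^2-c_s^2J_1^2$, not $J_0^2-J_1^2$: the prefactor $c_s^2$ depends on the aperture and on the unknown direction $\phi_s$ of $\mx-\mz_s$, and does not normalize to $1$. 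For the paper's own aperture $\alpha=\pi/4$, $\beta=3\pi/4$ one gets $c_s^2=32\sin^2\phi_s/\pi^2$, which ranges over $[0,32/\pi^2]$. Worse, in the full-view limit $\beta-\alpha=2\pi$ the factor $\sin\frac{1}{2}(\beta-\alpha)$ vanishes, so the $n=1$ term contributes nothing at all---yet (\ref{StructureImagingFunctionMultiple}) is precisely the known full-view structure of \cite{JKHP}. This shows the $-J_1^2$ terms in the target formula cannot originate from the aperture correction. There is also no principled reason to keep $n=1$ while discarding $n\ge2$: all of these terms are of the same order $O(1/\sqrt{kr_s})$.

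What the paper actually does is simpler and avoids this issue entirely: it takes the integrand to be $J_0(kr_s)^2$ alone (Theorem \ref{Theorem1} with the error term dropped), applies the antiderivative identity, and is left with
\[\int_{k_1}^{k_F}J_0(kr_s)^2dk=k_F\big(J_0(k_Fr_s)^2+J_1(k_Fr_s)^2\big)-k_1\big(J_0(k_1r_s)^2+J_1(k_1r_s)^2\big)+\int_{k_1}^{k_F}J_1(kr_s)^2dk,\]
after which the residual integral $\int_{k_1}^{k_F}J_1(kr_s)^2dk$ is discarded by citing \cite[Theorem 3.4]{JKHP}. In other words, the $J_1^2$ terms in (\ref{StructureImagingFunctionMultiple}) are generated by the antiderivative of $J_0^2$ itself, and the theorem's ``$\approx$'' absorbs exactly the discarded $\int J_1^2$ term; there is no exact fundamental-theorem-of-calculus evaluation to be had. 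To repair your argument, drop the $n=1$ bookkeeping, keep the integrand $J_0^2$, and supply (or cite) the bound showing that $\int_{k_1}^{k_F}J_1(kr_s)^2dk$ is negligible relative to the retained boundary terms.
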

\begin{proof}
Similar to the proof of Lemma \ref{Lemma}, we consider polar coordinates such that $\vt=(\cos\theta,\sin\theta)$ and $\mx-\mz_s=r_s(\cos\phi_s,\sin\phi_s)$. Then, according to (\ref{ImagingFunctionSingle}), we can observe that
\[\mathcal{I}_{\mathrm{MF}}(\mx;k)\approx\frac{1}{F}\sum_{f=1}^{F}\sum_{s=1}^{S}J_0(k|\mx-\mz_s|)^{2}
\approx\frac{1}{k_F-k_1}\sum_{s=1}^{S}\int_{k_1}^{k_F}J_0(kr_s)^2dk.\]
Using this, we apply an indefinite integral formula of the Bessel function (see \cite[page 35]{R})
\[\int J_0(x)^2dx=x\bigg(J_0(x)^2+J_1(x)^2\bigg)+\int J_1(x)^2dx.\]
Thus, we immediately obtain
\begin{equation}\label{IntegralEquation}
\int_{k_1}^{k_F}J_0(kr_s)^2dk =k_F\bigg(J_0(k_Fr_s)^2+J_1(k_Fr_s)^2\bigg)-k_1\bigg(J_0(k_1r_s)^2+J_1(k_1r_s)^2\bigg)+\int_{k_1}^{k_F}J_1(kr_s)^2dk.
\end{equation}
Based on \cite[Theorem 3.4]{JKHP}, the last term of (\ref{IntegralEquation}) can be disregarded. Hence, we obtain (\ref{StructureImagingFunctionMultiple}).
\end{proof}

Based on Lemma 3.1, Theorems 3.2 and 3.3, we can examine certain properties. They can be summarized as follows.
\begin{enumerate}
  \item The condition of sufficiently large $k$ is a very strong assumption. If this condition is not fulfilled, the term
      \[O\bigg(\frac{1}{\sqrt{k|\mx-\mz|}}\bigg)\]
      of (\ref{ImagingFunctionSingle}) will disturb the imaging performance. This is why the application of a high frequency yields good results, and why subspace migration in the full-view problem offers better results than in the limited-view case.
  \item If $\mx$ is close to $\mz$, i.e., $|\mx-\mz|$ is small, then $k|\mx-\mz|$ is not sufficiently large and we cannot use the asymptotic form (\ref{AsymptoticBessel}). This means that the remaining term of Lemma \ref{Lemma}
      \[4\sum_{n=1}^{\infty}\frac{i^{n}}{n}J_n(kr_s)\sin\frac{1}{2}\bigg(n(\beta-\alpha)\bigg)\cos\frac{1}{2}\bigg(n(\beta+\alpha)\bigg)\]
      is not negligible, so some blurring effects will appear in the neighborhood of cracks (refer to Figure \ref{Result}).
  \item Conversely, if the location of $\mx$ is far from $\mz$, i.e., $|\mx-\mz|$ is large, then the map of $\mathcal{I}(\mx;k)$ will be plotted at $0$.
\end{enumerate}

\section{Numerical examples}\label{sec4}
In order to support some identified properties in Theorems \ref{Theorem1} and \ref{Theorem2}, we now present some numerical examples. For this purpose, $N=12$ different incident and observation directions are applied, such that
\[\vt_n=\alpha+(\beta-\alpha)\frac{n-1}{N-1}\quad\mbox{for}\quad\alpha=\frac{\pi}{4}\quad\mbox{and}\quad\beta=\frac{3\pi}{4}.\]
The applied wavenumber is of the form $k_f=2\pi/\lambda_f$, where $\lambda_f$, $f=1,2,\cdots,F(=10)$, is the given wavelength and $k_f$ are equi-distributed in the interval $[k_1,k_{10}]$ with $\lambda_1=0.6$ and $\lambda_{10}=0.2$.

The far-field pattern data $u_\infty^{(n)}(\mx;k)$ are generated by a Fredholm integral equation of the second kind along the crack introduced in \cite[Chapter 4]{N} to avoid committing inverse crimes. Three cracks $\Sigma_s$ with small length $\ell=0.05$ are chosen for the numerical simulations, with
\begin{align*}
  \Sigma_1&=\set{[t-0.6,-0.2]^T:-\ell\leq t\leq\ell}\\
  \Sigma_2&=\set{\mathfrak{R}_{\pi/4}[t+0.4,t+0.35]^T:-\ell\leq t\leq\ell}\\
  \Sigma_3&=\set{\mathfrak{R}_{7\pi/6}[t+0.25,t-0.6]^T:-\ell\leq t\leq\ell}.
\end{align*}
Here, $\mathfrak{R}_{\varphi}$ denotes rotation by $\varphi$.

\begin{figure}[!ht]
\begin{center}
\subfigure[Map of $\mathcal{I}(\mx,0.2)$]{\includegraphics[width=0.49\textwidth]{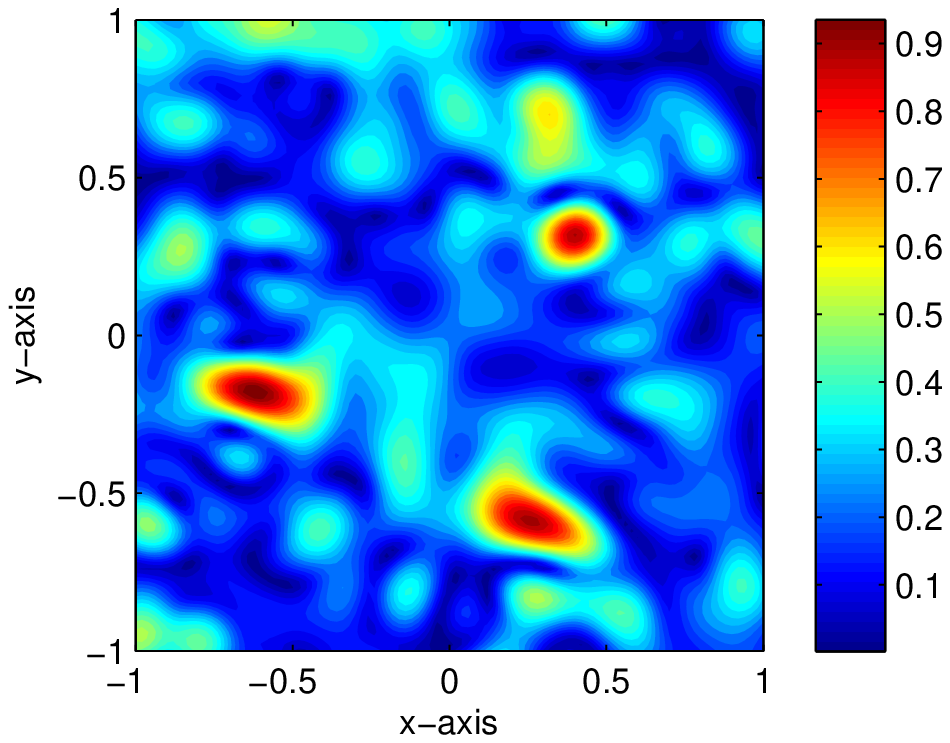}}
\subfigure[Map of $\mathcal{I}(\mx,0.2)$]{\includegraphics[width=0.49\textwidth]{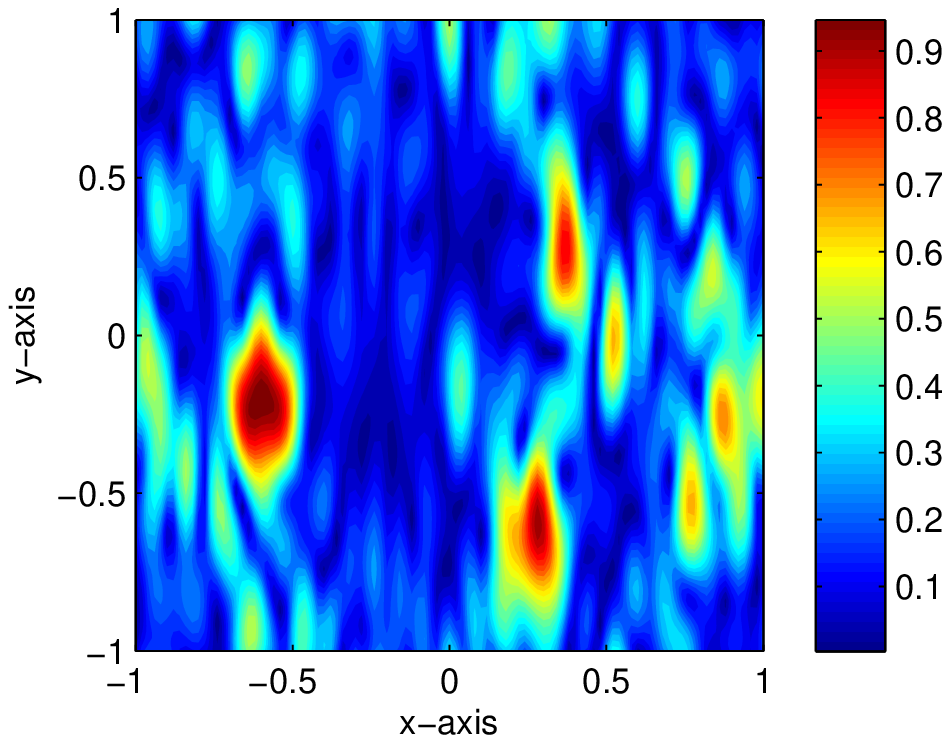}}\\
\subfigure[Map of $\mathcal{I}_{\mathrm{MF}}(\mx,k_1,k_{10};10)$]{\includegraphics[width=0.49\textwidth]{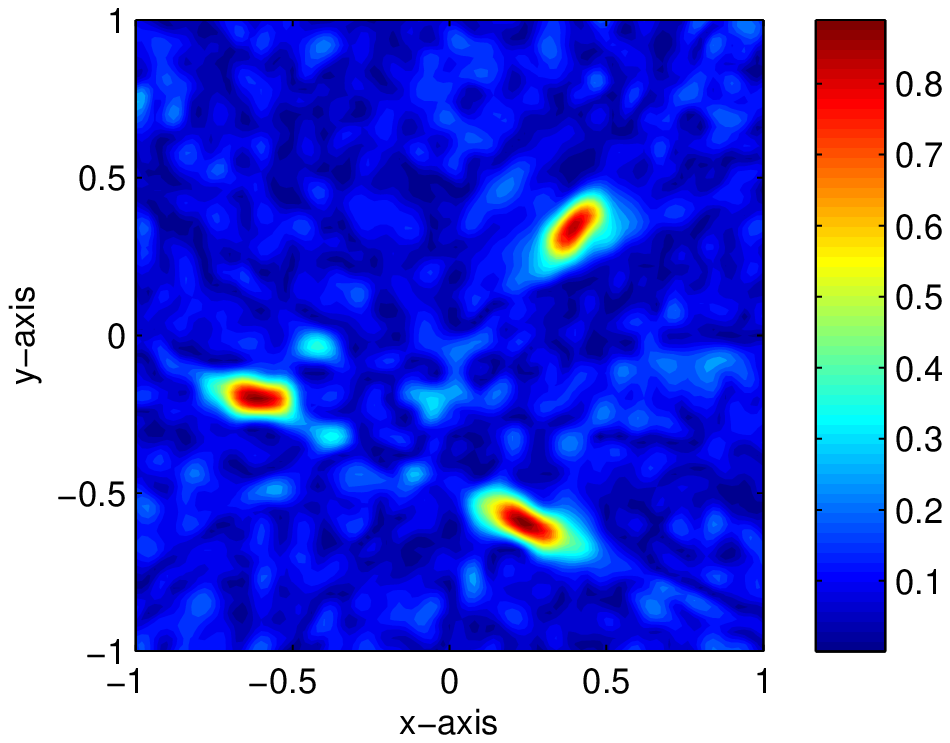}}
\subfigure[Map of $\mathcal{I}_{\mathrm{MF}}(\mx,k_1,k_{10};10)$]{\includegraphics[width=0.49\textwidth]{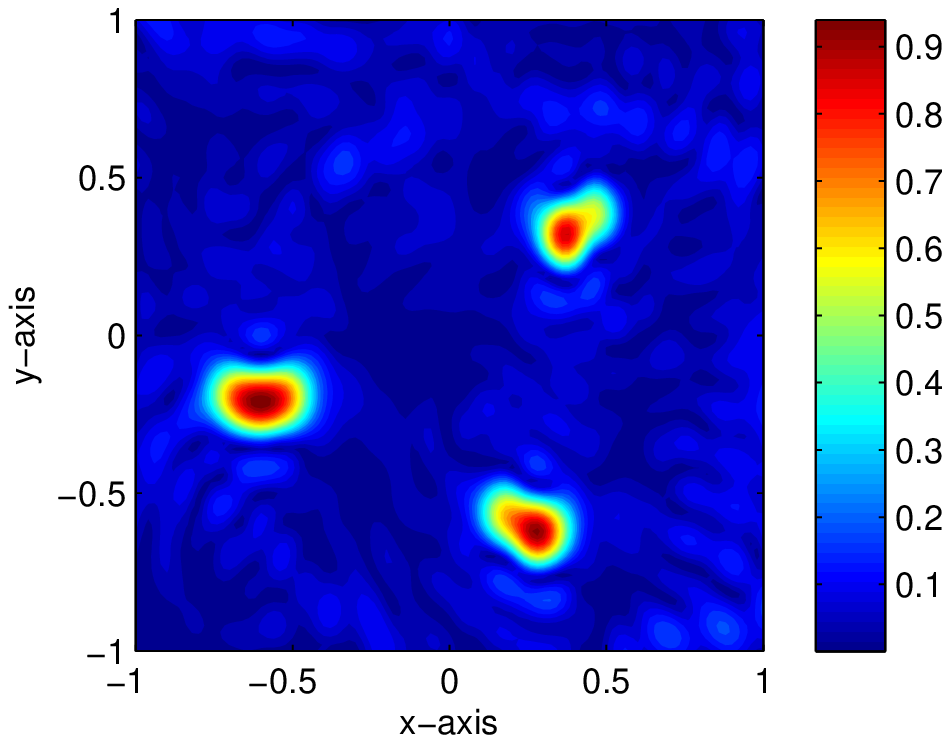}}
\caption{\label{Result}Maps of $\mathcal{I}(\mx,k_{10})$ and $\mathcal{I}_{\mathrm{MF}}(\mx,k_1,k_{10};10)$. Left column: full-view case. Right column: limited-view case.}
\end{center}
\end{figure}

Figure \ref{Result} shows maps of $\mathcal{I}(\mx,k_{10})$ and $\mathcal{I}_{\mathrm{MF}}(\mx,k_1,k_{10};10)$ for the full- and limited-view cases. Based on Theorem \ref{Theorem1}, the locations of the $\Sigma_s$ are successfully identified, but many replicas and blurring effects in the neighborhood of cracks disturb the identification. However, applying a multi-frequency yields a more accurate result than the single-frequency case. This supports Theorem \ref{Theorem2}. In any case, we can conclude that our analysis demonstrates that subspace migration can determine the location of small cracks in limited-view problems.

\section{Conclusions}
In this paper, we considered subspace migration for the imaging of small, perfectly conducting cracks. Based on a relation with an integer-order Bessel function of the first kind, we confirmed the effectiveness of subspace migration in limited-view inverse scattering problems under certain strong assumptions. We also identified some particular properties of the structure of this migration.

The main subject of this paper is the imaging of small cracks. Extension to the imaging of arbitrary shaped, arc-like cracks will be considered in future work. Although we considered a two-dimensional problem, its extension to  three dimensions will be an interesting challenge.

\section*{Acknowledgements}
This research was supported by the Basic Science Research Program through the National Research Foundation of Korea (NRF) funded by the Ministry of Education, Science and Technology (No. 2012-0003207), and the research program of Kookmin University in Korea.

\end{document}